\newcommand{\im}{{\rm i}}
\newcommand{\sgn}{{\rm sgn}}
\begin{document}

\title{Sinusoidal Sensitivity Calculation for Line Segment Geometries}

\author[1,2]{Luciano Vinas}{}

\author[2,3]{Atchar Sudyadhom}{}

\authormark{Vinas and Sudyadhom}

\address[1]{\orgdiv{Department of Statistics}, \orgname{UCLA}, \orgaddress{\city{Los Angeles}, \state{California}, \country{USA}}}

\address[2]{\orgdiv{Department of Radiation Oncology}, \orgname{Dana-Farber Cancer Institute | Brigham and Women’s Hospital}, \orgaddress{\city{Boston}, \state{Massachusetts}, \country{USA}}}

\address[3]{\orgname{Harvard Medical School}, \orgaddress{\city{Boston}, \state{Massachusetts}, \country{USA}}}

\corres{Luciano Vinas, Mathematical Sciences, UCLA, 520 Portola Plaza, Los Angeles, CA. \email{lucianovinas@g.ucla.edu}}

\finfo{This work was partially supported by \fundingAgency{NIBIB of the National Institutes of Health} Award number: \fundingNumber{R21EB026086}}

\abstract{\section{Purpose} Provide a closed-form solution to the sinusoidal coil sensitivity model proposed by Kern et al. Solution allows for the precise computations of varied, simulated bias fields which can be directly applied onto raw intensity datasets.
\section{Methods} Fourier distribution theory and standard integration techniques were used to calculate the Fourier transform of measured magnetic field produced from line segment sources.
\section{Results} A $L^1_{\rm loc}(\mathbb{R}^3)$ function is derived in full generality for arbitrary line segment geometries. Sampling criteria and equivalence to the original sinusoidal model are discussed. Lastly a CUDA accelerated implementation \texttt{biasgen} is provided for on-demand sensitivity and bias generation.
\section{Conclusion} Given the modeling flexibility of the simulated procedure, practitioners will now have access to a more diverse ecosystem of simulated datasets which may be used to quantitatively compare prospective debiasing methods.}

\keywords{sinusoidal sensitivity, coil sensitivity maps, bias fields, distribution theory, sparse sampling}

\jnlcitation{\cname{%
\author{Vinas L}, and
\author{Sudyadhom A}} (\cyear{2022}), 
\ctitle{Sinusoidal Sensitivity Calculation for Segment Geometries}, \cjournal{Magn Reson Med.}, .}

\maketitle

\footnotetext{\textbf{Abbreviations:}~\hbox{MR,~magnetic~resonance;~INU,~non-uniformity~intensity;} RF, radiofequency; BART, Berkeley Advanced Reconstruction Toolbox}

\section{Introduction}\label{sec:intro}

The success of parallel imaging methods in magnetic resonance (MR) imaging have allowed for quicker image acquisition with little to no cost of spatial aliasing\cite{Deshmane2012}. For parallel imaging methods, reconstruction quality is dependent on how well approximated the gain maps are for the different radiofrequency (RF) receiver coil contributions. These gain normalization maps, also known as coil sensitivity maps, can be difficult to estimate at scan time. When incorrectly normalized, the combined receiver coil contributions can form series of spatial inhomogeneities, known as bias fields, in the final reconstructed image. When left uncorrected, these corruption artifacts can throw off predictive accuracy of different statistical learning models. 

Fortunately, there is a rich literature in image post-processing techniques which aim to amend the MR bias field problem\cite{Tustison10,Wells96}. These techniques may vary from assuming a smooth prior on the generated bias field to choosing a supervised approach at predicting bias fields\cite{Simko22}. In practice these methods are compared against a select few ground-truth datasets, such as the BrainWeb phantom dataset\cite{Cocosco97}. Generally these datasets feature a single fixed bias field, which leaves any supervised method at the risk of overfitting.

\begin{figure}[t]
\centering
    \subfloat[\centering T1-weighted BrainWeb biased data]{{\includegraphics[height=3.5cm]{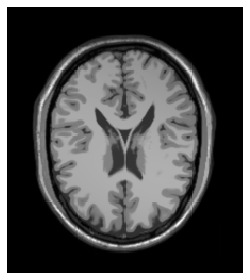} }}%
    \subfloat[\centering T1-weighted thorax MR data]{{\includegraphics[height=3.5cm]{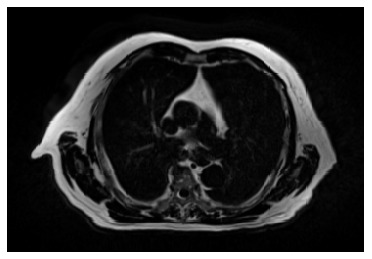} }}%
    \caption{Bias field comparison between 40\% INU BrainWeb and patient thorax MR image. Relative to subfigure (a), subfigure (b) showcases sharp bias intensities at anatomy boundary.}%
    \label{fig:1}%
\end{figure}

Another thing to note is that the bias field featured in the BrainWeb dataset are not very noticeable over the soft tissue areas. As show in Figure \ref{fig:1}, different anatomies which are scanned using different coil configurations may experience more intense bias fields in their final reconstruction. In an ideal setting, our debias testing environment should contain a wide range of physically-viable bias fields with empirically supported functional forms. One prospective model is mentioned in Ref. [\citen{Kern12}] which states that coil sensitivity maps may be well approximated by sparse Fourier representations of the received magnetic field. This is a sensitivity model which is already applied and cited by other popular medical software such as the Berkeley Advanced Reconstruction Toolbox\cite{Uecker15} (BART). 

A current hang-up of the model is that it requires computing the full Fourier transform of some numerically estimated magnetic field. Practical implementations of the model, like the one done by BART, use a small selection of dominant frequencies from an empirically verified source and then apply the fixed sensitivity maps to different phantoms and tissue contrasts. Alternatively one may consider taking fast Fourier transforms (FFT) of some numerical data, but this approach runs into the issue of being resolution limited when considering finer sampling grids.

The goal of this paper will be to provide a closed-form solution to sinusoidal sensitivity model of Ref. [\citen{Kern12}] for a restricted but expressive class of emitted magnetic fields. In particular we are interested in the sensitivity maps produced by magnetic fields generated from compound line segment geometries. The main equation of focus will be sensitivity model
\begin{equation}\label{eq:sens}
S(x) = \sum_{\omega \in G} e^{-\im \omega \bullet x} \mathscr{F}\{\bar{B} (x)\}(\omega),
\end{equation}
where $G\subset\mathbb{R}^3$ is a finite grid of points and 
\begin{equation}\label{eq:meas_B}
\bar{B}(x) =  B(x) \bullet (u + \im v)
\end{equation} 
is the measured magnetic field of some source $B(x)$ according to a readout direction $u\in\mathbb{R}^3$ and a phase encoding direction $v\in\mathbb{R}^3$. Notice that as the sensitivity model (\ref{eq:sens}) and the measured field (\ref{eq:meas_B}) are linear in the source $B$ and $B$ itself follows a superposition property, we can, without loss of generality, consider the Fourier transform of single line segment in order to solve (\ref{eq:sens}) for the class all compound line segment sources $B$.

\section{Line Segment Magnetic Field}\label{sec:mag_field}
\begin{figure}[t]
\centering
\includegraphics[height=5cm]{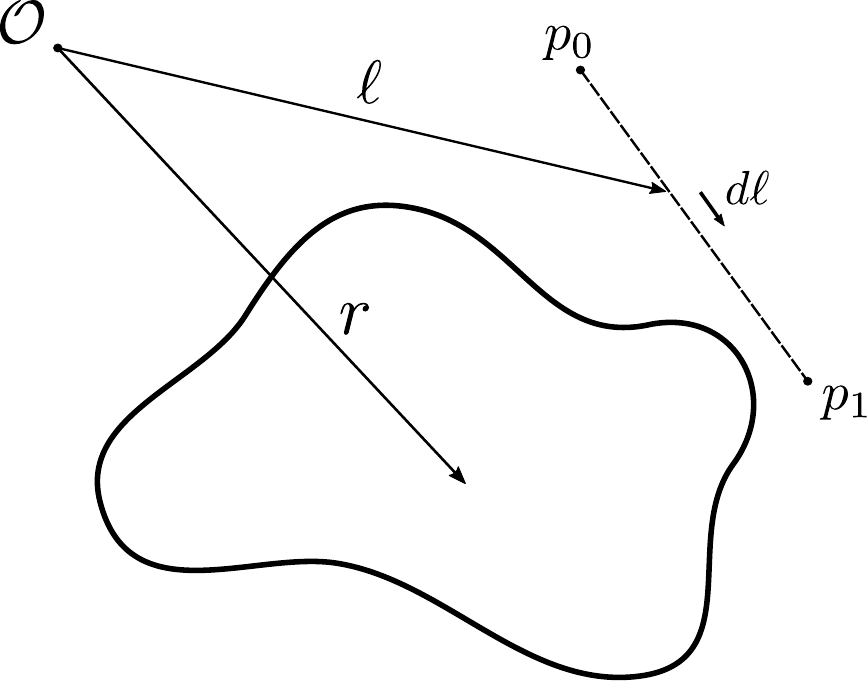}
\caption{Line segment under consideration. Segment $C$ has endpoints specified by $(p_0,p_1)$. Field $B$ is being evaluated at point $r$.}\label{fig:2}
\end{figure}
\noindent We consider the magnetic field contribution of some arbitrary line source $C$ shown in Figure \ref{fig:2}. The contribution of an infinitesimal section of the line segment $d\ell$ can expressed using Biot-Savart law,
$$d B(x)  = \frac{\mu_0 I}{4\pi} \frac{d\ell \times (x-\ell)}{||x-\ell||_2^3}.$$
Assuming a constant current $I$, the total magnetic field is proportional to
$$B(x) \propto \int_{C} \frac{\widehat{d\ell} \times (x-\ell)}{||x-\ell||_2^3},$$
where the right-hand has been normalized by segment length $||p_1 - p_0||_2$. Note that $C$ can be parameterized as 
$$C(t) = p_0 (1-t) + tp_1,\quad\text{for}\; t\in[0,1].$$
A change of variables reveals
$$\widehat{d\ell} = \frac{p_1-p_0}{||p_1 - p_0||_2}\,dt.$$
With shorthands $a = p_1 - p_0$ and $b = x - p_0$ we have
$$B(x) \propto \frac{a\times b}{||a||_2} \int_0^1 \frac{dt}{||b-at||_2^3}.$$
This integral has the following solution
$$\bigg(\frac{a\times b}{||a||_2^2 \,||b||_2^2 - (a\bullet b)^2}\bigg) \frac{-a\bullet (b-at)}{||a||_2 ||b-at||_2}\bigg|_0^1.$$
With the additional shortand $c = b-a$,
$$B(x) \propto \frac{(\hat{a}\times\hat{b})}{||a||_2||b||_2 (1-(\hat{a}\bullet\hat{b})^2)}\big(\hat{a}\bullet \hat{b} - \hat{a}\bullet \hat{c}\big).$$
Using identity $||a\times b||_2^2 = ||a||_2^2 \,||b||_2^2 - (a\bullet b)^2$, we arrive at the final symmetric form
$$B(x) \propto \frac{(\hat{a}\bullet\hat{b})(\hat{a}\times \hat{b})}{||a||_2||b||_2(1-(\hat{a}\bullet\hat{b})^2)}-\frac{(\hat{a}\bullet\hat{c})(\hat{a}\times \hat{c})}{||a||_2||c||_2(1-(\hat{a}\bullet\hat{c})^2)}.$$

\section{Fourier Transform of a Line Magnetic Field}\label{sec:fourier}
Recall the measured magnetic field $\bar{B}(x) = B(x)\bullet (u+\im v)$. Calculating (\ref{eq:sens}) simplifies to calculating the Fourier transform for functions of the form
$$T^{(p_0)}(x) = \frac{(\hat{a}\bullet \hat{b}) \big(u\bullet (\hat{a}\times\hat{b})\big)}{||a||_2||b||_2 (1- (\hat{a}\bullet\hat{b})^2)}.$$
When integrating we will consider the change of variable 
$$x^\prime = H^{\intercal}(x - p_0)$$
where $H$ is an orthonormal rotation matrix. We are interested in rotating to the orthogonal basis 
$$\mathcal{B} = \{\omega - \hat{a}(\omega\bullet \hat{a}),\;\omega \times a ,\;a\},$$
with coordinate representations $a^\prime = (0,0,||a||_2)$ and $\omega^\prime = (\omega_x^\prime,0,\omega_z^\prime)$. In particular we align 
will $a^\prime$ with the $(+z)$-axis and place $\omega^\prime$ in the $(+x,z)$-halfspace. For this reason we will assume $\omega_x^\prime > 0$ throughout without loss of generality. Additionally we note in the case $\omega$ and $a$ are collinear, the Fourier transform $\mathscr{F}\{T\}(\omega) = 0$ regardless of choice of basis. 

This change of variables produces the following simplificaitons
\begin{align*}
||b||_2 &= ||x^\prime||_2,\\
a\bullet b &= (H^\intercal a) \bullet x^\prime,\\
u\bullet (a\times b) &= u \bullet (H H^{\intercal} a)\times (H x^\prime)\\
&= (H^\intercal u)\bullet (H^{\intercal}a)\times x^\prime,
\end{align*}
where $H^\intercal a = (0,0,||a||_2)$.

As the choice of basis $\mathcal{B}$ is independent to input $b$, the $x^\prime$ transformation can be adapted to $T^{(p_1)}$ with $x^{\prime\prime} = H^\intercal(x-p_1)$. Next defining $T = T^{(p_0)} - T^{(p_1)}$ we can combine relevant integrations as so
\begin{align*}
\int T(x) \,e^{\im \omega \bullet x} dx &= \int T^{(p_0)}(x)\,e^{\im \omega \bullet x}\, dx - \int T^{(p_1)}(x)\,e^{\im \omega \bullet x}\, dx\\
&= \big(e^{\im \omega^\prime \bullet H^\intercal p_0} - e^{\im \omega^\prime \bullet H^\intercal p_1}\big) \int T^{(0)}(x^\prime) \,e^{\im \omega^\prime \bullet x^\prime}\, dx^\prime
\end{align*}
For shorthand we introduce
\begin{align*}
A &= e^{\im \omega^\prime \bullet H^\intercal p_0} - e^{\im \omega^\prime \bullet H^\intercal p_1}\\
&= -\im \sin(||a||_2\, \omega_z^\prime/2)\, e^{\im \omega^\prime \bullet H^\intercal (p_1 + p_0)/2},
\end{align*}
where equality follows from $\omega^\prime \bullet H^\intercal (p_1 - p_0) = \omega_z^\prime ||a||_2$. 

Now we begin the task of calculating the Fourier transform of $T$. We will suppress all prime notation as it is understood that integration will be done in the rotated coordinates. Integrating with spherical coordinates
\begin{multline*}
\mathscr{F}\{T\}(\omega) = \frac{A}{||a||_2} \int_0^\infty \int_0^\pi \int_0^{2\pi} \bigg(\frac{\cos\theta}{ r (1-\cos\theta^2)}\bigg)\cdot\\ \big(\sin\theta(u_y \cos\phi - u_x\sin\phi) e^{\rm i r (\omega_x \cos\phi\sin\theta + \omega_z\cos\theta)}\big)\cdot\\(r^2 \sin\theta) \,d\phi\,d\theta\,dr
\end{multline*}
Contribution $u_x\sin\phi$ can be dropped as
$$\int_0^{2\pi} u_x \sin\phi e^{\im r \omega_x \cos\phi \sin\theta} = \frac{\im}{ r\omega_x \sin\theta} e^{\im r\omega_x \cos\phi\cos\theta}\bigg|_{\phi=0}^{2\pi} = 0.$$
Next expand $e^{\im \eta} = \cos \eta + \im \sin \eta$, symmetrize integration bounds, and simplify all odd $(\theta,\phi)$-function contributions
\begin{multline*}
\mathscr{F}\{T\}(\omega) = 2u_y \frac{A}{||a||_2} \int_0^\infty \int_{-\pi/2}^{\pi/2} \int_{-\pi/2}^{\pi/2} (r\sin\theta\sin\phi)\cdot\\ (\sin(r\omega_x \sin\phi\cos\theta)\sin(r\omega_z \sin\theta))\,d\phi\,d\theta\,dr.
\end{multline*}
Although we see now the integrand does not lie in $L^1(\mathbb{R}^3)$, the Fourier transform of $T$ does still exist in a distributional sense. If $T$ is a tempered distribution for the Schwartz class $\mathcal{S}(\mathbb{R}^3)$ then the Fourier transform of $T$ can be defined by $\mathscr{F}\{T\}$ which satisifes
\begin{equation}\label{eq:F_iden}
\int \mathscr{F}\{T\}(\omega)\, \varphi(\omega)\,d\omega = \int T(x) \,\mathscr{F}\{\varphi\}(x)\, dx\hspace{2mm}\forall\varphi\in\mathcal{S}(\mathbb{R}^3).
\end{equation}
Since $T(x) \lesssim ||x||_2^{-1}$, we have by the spherical integration factor of $\mathbb{R}^3$ that $|\int T(x) \varphi(x)\, dx| \leq C_\varphi <\infty$
for every $\varphi\in\mathcal{S}(\mathbb{R}^3)$. It follows that $T_\lambda(x)= T(x) e^{-\lambda ||x||_2}$ can also be identified with a tempered distribution. Lemma \ref{lem:conv} shows
$$\lim_{\lambda\rightarrow 0^+}\int \mathscr{F}\{T_\lambda\}(\omega)\,\varphi(\omega)\,d\omega = \int \mathscr{F}\{T\}(\omega)\,\varphi(\omega) \,d\omega$$
for all $\varphi\in\mathcal{S}(\mathbb{R}^3)$, given that $T \in \mathcal{S}^\prime(\mathbb{R}^3)\cap L^1_{\rm loc}(\mathbb{R}^3)$ is a tempered \textit{function} of polynomial growth. 

Upon confirmation that $\lim_{\lambda\rightarrow 0^+} \mathscr{F}\{T_\lambda\}$ is also a tempered function of polynomial growth, we will have that $(\varphi \lim_{\lambda\rightarrow 0^+} \mathscr{F}\{T_\lambda\})$ is absolutely integrable. The immediate consequence from Lebesgue dominated convergence theorem is
$$\int \mathscr{F}\{T\}(\omega)\,\varphi(\omega) \,d\omega = \int \Big(\lim_{\lambda \rightarrow 0^+}\mathscr{F}\{T_\lambda\}(\omega)\Big)\varphi(\omega)\,d\omega.$$
Since $T_\lambda \in L^1(\mathbb{R}^3)$, we may directly evaluate $\lim_{\lambda\rightarrow 0^+}\mathscr{F}\{T_\lambda\}$ while maintaining equality to $\mathscr{F}\{T\}$ without needing to rely on identity (\ref{eq:F_iden}). 

With this in mind, we suppress the limit notation of our left-hand side equalties and carry-on with the integration
\begin{multline*}
\mathscr{F}\{T\}(\omega) = \lim_{\lambda\rightarrow 0^+} 2u_y \frac{A}{||a||_2}\int_0^\infty\int_{-\pi/2}^{\pi/2}\int_{-\pi/2}^{\pi/2}(r\sin \theta \sin\phi)\cdot\\
\big(\sin(r\omega_x\sin\phi \cos\theta)\sin(r\omega_z\sin\theta)e^{-\lambda r}\big)\,d\phi\,d\theta\,dr.
\end{multline*}
Apply trigonometric identity 
$$\sin(x)\sin(y) = (\cos(x-y) - \cos(x+y))/2,$$
and integrate with respect $r$,
\begin{multline*}
\mathscr{F}\{T\}(\omega) = \lim_{\lambda\rightarrow 0^+}-u_y\frac{ A}{||a||_2}\int_{-\pi/2}^{\pi/2}\int_{-\pi/2}^{\pi/2}(\sin \theta \sin\phi)\cdot\\\bigg(\frac{\lambda^2 - (\omega_x\sin\phi\cos\theta + \omega_z\sin\theta)^2}{(\lambda^2 + (\omega_x\sin\phi\cos\theta + \omega_z\sin\theta)^2)^2}-\\ \frac{\lambda^2 - (\omega_x\sin\phi\cos\theta - \omega_z\sin\theta)^2}{(\lambda^2 + (\omega_x\sin\phi\cos\theta - \omega_z\sin\theta)^2)^2}\bigg) \,d\phi\,d\theta.
\end{multline*}
As shown by corollary \ref{cor:domin}, this integrand can be proportionally dominated by the $\lambda = 0$ evaluation. By dominated convergence theorem
\begin{multline*}
\mathscr{F}\{T\}(\omega) = -4u_y\frac{A}{||a||_2}\int_{-\pi/2}^{\pi/2}\int_{-\pi/2}^{\pi/2}\\ \Bigg(\frac{\omega_x\omega_z\sin^2\theta\cos\theta\sin^2\phi}{(\omega_x^2\sin^2\phi\cos^2\theta - \omega_z^2\sin^2\theta)^2}\Bigg)\,d\phi\,d\theta.
\end{multline*}
Consider the change of variables $t=\sin\theta$ and integrate with respect to $t$ using a partial fraction decomposition,
\begin{multline*}
\mathscr{F}\{T\}(\omega) = -8u_y\omega_z\frac{A}{\omega_x||a||_2}\int_{0}^{\pi/2}\bigg(\frac{1}{1 + (\omega_z^2/\omega_x^2)\csc^2\phi}\bigg)\cdot \\\Bigg(\frac{1}{\omega_z^2}
+ \frac{1}{\omega_x^2}\frac{\csc^2\phi\tanh^{-1}(1+(\omega_z^2/\omega_x^2)\csc^2\phi)^{-1/2}}{\sqrt{1 + (\omega_z^2/\omega_x^2)\csc^2\phi}} \Bigg)\, d\phi
\end{multline*}
For brevity we will consider the shorthands
$$\mathscr{F}\{T\}(\omega) = -8u_y\omega_z\frac{A}{\omega_x||a||_2}\bigg(\frac{I_1}{\omega_z^2} + \frac{I_2}{\omega_x^2}\bigg).$$
Calculating first integration term $I_1$
$$\frac{I_1}{\omega_z^2} = \frac{1}{\omega_z^2}\int_{0}^{\pi/2}\frac{1}{1 + (\omega_z^2/\omega_x^2)\csc^2\phi}\,d\phi = \frac{\pi}{2}\frac{1}{\omega_z^2} \bigg(1 - \frac{|\omega_z|}{||\omega||_2}\bigg).$$
This equality follows from $\lim_{x\rightarrow \pi/2} c\tan^{-1}(\tan (x)/c) = |c|$ and $\omega_x > 0$. Next with trigonometric identity
$$\csc^2(\phi) = 1 + \cot^2(\phi),$$
introduce change of variables $t= \cot(\phi)$ and shorthands $\rho^2= 1 + (\omega_z^2/\omega_x^2)$ and $\kappa^2 = \rho^2 - 1$ to simplify the second integration term as
$$I_2 = \int_{0}^\infty \frac{1}{(\rho^2 +  \kappa^2 t^2)^{3/2}}\tanh^{-1}\frac{1}{\sqrt{\rho^2 + \kappa^2t^2}}\, dt$$
Calculating $I_2$
\begin{align*}
I_2 &= 0+ \frac{1}{\rho^2}\int_0^\infty \frac{t^2\,dt}{(1+t^2)(\rho^2 + \kappa^2 t^2)}\\
&= \frac{\pi}{2}\bigg(\frac{1}{\rho |\kappa|} - \frac{1}{\rho^2}\bigg).
\end{align*}
Here a partial fraction decomposition was used on the integration-by-parts integral.Written in terms of $\omega$,
$$\frac{I_2}{\omega_x^2} = \frac{\pi}{2}\frac{1}{\omega_z^2} \Big(\frac{|\omega_z|}{||\omega||_2} - \frac{\omega^2_z}{||\omega||_2^2}\Big).$$
Summed together with equality $\omega_x^2 = ||\omega||_2^2- \omega_z^2$,
$$\mathscr{F}\{T\}(\omega) = -4\pi u_y\frac{A}{||a||_2}\frac{\omega_x}{\omega_z} \frac{1}{||\omega||_2^2}.$$
Lastly expanding $A$ in terms of the original, canonical coordinates
\begin{multline}\label{eq:T_transf}
\mathscr{F}\{T\}(\omega) = \frac{4\pi \im (u \bullet \widehat{\omega\times a})}{||\omega||_2}\,e^{{\rm i} \omega\bullet \frac{p_1 + p_0}{2}}\,\text{sinc}\bigg(\frac{\omega \bullet a}{2}\bigg)\cdot\\ \sqrt{1 - (\hat{\omega}\bullet \hat{a})^2}.
\end{multline}
Note that as
$$\big|\lim_{\lambda \rightarrow 0^+}\mathscr{F}\{T_\lambda\}(\omega)\big|\lesssim ||\omega||_2^{-1}$$
we confirm our earlier claim that the pointwise limit of $\mathscr{F}\{T_\lambda\}$ is a tempered function of polynomial growth in $\mathbb{R}^3$.

When combined with the $v$ measurement contribution, we obtain the final closed-form solution to the sinusoidal sensitivity model. Examples of simulated sensitivities and bias fields for different grids $G$ can be found in Appendix \ref{app:simul}.

\section{Sampling with the Sinusoidal Sensitivity Model}\label{sec:sampling}
Consider the sparse Fourier sampler
$$J(\eta) = \sum_{\omega\in G}\delta(\eta-\omega)$$
where $\delta(\cdot-\omega)$ is the Dirac delta distribution centered at $\omega$. Understood formally, the sinusoidal sensitivity model can be expressed in terms of this sampler as
$$S(x) = \mathscr{F}^{-1}\{J \mathscr{F}\{\bar{B}\}\}(x).$$
An issue found with sampler $J$ is that its action on function $f$ is only well-defined if the function $f$ has meaningful point evalutions. The framework used in Section \ref{sec:fourier} worked with class $\mathscr{F}\{\bar{B}\}\in L^1_{\rm loc}(\mathbb{R}^3)$ whose behavior is only specified with respect to an integrating action against the Lebesgue measure on $\mathbb{R}^3$.

In hopes to extend the sinusoidal model to work better with the equivalence class $\mathscr{F}\{\bar{B}\}$ we may consider the generalized sampler
$$J_R = \sum_{\omega\in G}\frac{1}{{\rm vol}(B(\omega,R))}\chi_R(\eta;\omega)$$
where $\chi_R(\eta;\omega) = 1\{||\eta-\omega||_2 < R\}$ is the $R$-cutoff function centered at $\omega$ and ${\rm vol}(B(\omega,R))$ is the volume of the $R$-radius sphere centered at $\omega$. We see then for $R>0$, the generalized sinusoidal model 
$$S_R(x) = \mathscr{F}^{-1}\{J_R\, \mathscr{F}\{\bar{B}\}\}(x)$$
does produce a well-defined result. Ideally we would like limit $\lim_{R\rightarrow 0^+} S_R$ to be equivalent to our original sinusoidal sensitivity model. 

Before continuing, we introduce the following term for notational clarity. With $q = u+\im v$ consider
\begin{multline*}
g(\omega) = \frac{4\pi \im (q \bullet \widehat{\omega\times a})}{||\omega||_2}\,\text{sinc}\bigg(\frac{\omega \bullet a}{2}\bigg) \sqrt{1 - (\hat{\omega}\bullet \hat{a})^2}\,\frac{e^{{\rm i}\omega\bullet \frac{p_1 + p_0}{2}}}{e^{{\rm i}\omega\bullet x}}
\end{multline*}
which lies in the space of continuous functions. That is have $g(\omega)\in C(\mathbb{R}^3\setminus\{0\})$ and $e^{-\im\omega\bullet x}\,\mathscr{F}\{\bar{B}\}(\omega) \in L^{1}_{\rm loc}(\mathbb{R}^3)$. One can show that these objects satisfy the limit equality
\begin{align*}
g(\omega) = \lim_{R\rightarrow 0^+}\frac{1}{{\rm vol}(B(\omega,R))}\int_{||\eta-\omega||_2 < R} e^{-\im\eta\cdot x}\,\mathscr{F}\{\bar{B}\}(\eta)\,d\eta.
\end{align*}
To see this note that, for all continuity points of $g$, one can pair any deviation $\varepsilon > 0$ with a suitably small radius $R$ such that the following implication holds
$$|e^{-\im\eta\bullet x}\,\mathscr{F}\{\bar{B}\}(\eta) - g(\omega)|<\varepsilon$$
for almost every $\eta \in B(\omega, R)$. With shorthand $\langle f,g\rangle_\eta = \int f(\eta)\,g(\eta)\, d\eta$ and $\omega$-centered indicator $\chi_R(\eta;\omega) = 1\{||\eta - \omega||_2 < R\}$, it follows that
\begin{align*}
\big|\langle e^{-\im\eta\bullet x}\,\mathscr{F}\{\bar{B}\},\chi_R \rangle_\eta - g(\omega)\big| &\leq \big\langle |e^{-\im\eta\bullet x}\,\mathscr{F}\{\bar{B}\} -g(\omega)|,\chi_R\big\rangle_\eta\\
&< \langle \varepsilon, \chi_R\rangle_\eta\\
&= {\rm vol}(B(\omega,R)) \cdot \varepsilon. 
\end{align*}
As a consequence the equality
$$S(x) = \lim_{R\rightarrow 0^+} S_R(x)\quad\text{for all}\;x\in\mathbb{R}^3,$$
is well-defined with $S = \mathscr{F}^{-1}\{J g\}$ and any grid sampler $J$ which does not contain the origin.

\section{Implementation}
A CUDA accelerated implementation of the sinusoidal model \texttt{biasgen} can be found on authors' GitHub\footnote{\href{https://github.com/lucianoAvinas/biasgen}{https://github.com/lucianoAvinas/biasgen}}. This software is a Python package which takes in user-defined coil positions and sampling information to produce fully custom, 3-dimensional bias fields. A coil positioning and visualization tool is provided to help with setup. Various examples of how to run \texttt{biasgen} can be found in the examples subdirectory of the repository. 

Provided in Figures \ref{fig:sm}and \ref{fig:rg}are simulated bias fields superimposed on T1-weighted BrainWeb data. The bias fields $\bar{S}(r)$ were created by combining sensitivity maps $S_k(r)$ through a sum of squares approach
$$\bar{S}(r) = \bigg(\sum_{k=1}^N S_k^2(r)\bigg)^{1/2}.$$
Relevant parameters for forming sensitivity maps $S_k(r)$ were the sampling grid half-length $L$, sampling spacing factor $s= (s_x,s_y,s_z)$, and the sampling starting shift $c = (c_x, c_y, c_z)$. A per-coil breakdown of the individual emitter/receiver contributions for Figures \ref{fig:sm}and \ref{fig:rg}can be found in Appendix \ref{app:simul}.

\begin{figure}[t]%
    \centering
    \subfloat[\centering Top-down coil view]{{\includegraphics[height=4cm]{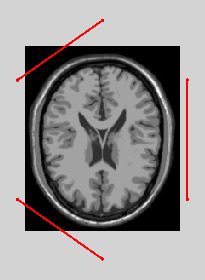}}}%
    \hspace{2mm}
    \subfloat[\centering BrainWeb T1 biased data]{{\includegraphics[height=4cm]{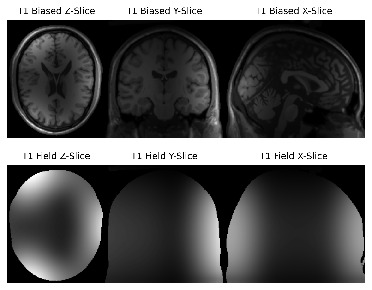}}}%
    \caption{Smooth bias field simulation with top-down view of the coil geometry (left). Sensitivity map parameters: $L=7$, $s=(1,1,1)$, $c = (0,0, 0)$.}\label{fig:sm}
\end{figure}

\begin{figure}[h]%
    \centering
    \subfloat[\centering Top-down coil view]{{\includegraphics[height=4cm]{coil_geom_topdown}}}%
    \hspace{2mm}
    \subfloat[\centering BrainWeb T1 biased data]{{\includegraphics[height=4cm]{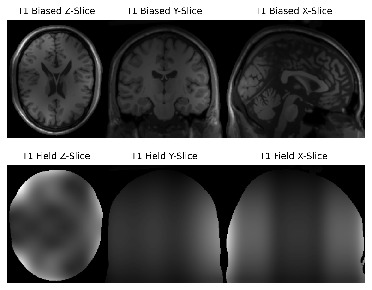}}}%
    \caption{Rough bias field simulation with top-down view of the coil geometry (left). Sensitivity map parameters: $L=12$, $s=(1.75,1.75,0.5)$, $c = (-0.25,-0.25, 0)$.}\label{fig:rg}
\end{figure}

In practice many realistic geometries can be recreated or approximated using series of line segments. For example smooth curve geometries such as circles may be approximated using sufficiently high degree polygons instead. The line segments of these polygons can then be modeled using equation (\ref{eq:T_transf}) from before. 

Lastly as the sensitivity maps of (\ref{eq:T_transf}) can be produced independently of underlying image intensity, \texttt{biasgen} can be a useful tool in augmenting existing MR dataset with more pronounced bias fields. Realistic data augmentation have been shown to increase the performance of certain neural models like the convolutional neural network\cite{Simard03}. This effect may be more pronounced in the case the supervised learning task is already data-limited.

\section{Conclusion}\label{sec:conc}

In this paper, we have derived in closed-form the Fourier transform of measured magnetic fields emitted by line segment geometries. This equation can be used to simulate realistic coil sensitivity maps up to arbitrary accuracy and sensitivity smoothness. Special care was taken to discuss the distributional nature of the solved Fourier transform and settings were identified where this closed-form agreed with the sparse sampled model introduced by Ref. [\citen{Kern12}]. A software package \texttt{biasgen} used for implementing the various equations of the paper has been provided by the authors. As next steps, further work can be done to solve (\ref{eq:sens}) for smooth line curves, such as for the case of circular or cylindrical geometries.

\section*{Acknowledgments}
Research reported in this manuscript was partially supported by the NIBIB of the National Institutes of Health under award number R21EB026086. The content is solely the responsibility of the authors and does not necessarily represent the official views of the National Institutes of Health.

\subsection*{Conflict of interest}

The authors declare no potential conflict of interests.

\bibliography{sens_calc}%

\begin{thebibliography}{1}

\bibitem{Deshmane2012}
Deshmane A, Gulani V, Griswold MA, Seiberlich N. Parallel MR imaging.  {\it
  Journal of magnetic resonance imaging : JMRI. }2012;36(1):55-72.

\bibitem{Tustison10}
Tustison NJ, Avants BB, Cook PA, et al. N4ITK: Improved N3 Bias Correction.
  {\it IEEE Transactions on Medical Imaging. }2010;29(6):1310-1320.

\bibitem{Wells96}
Wells WM, Grimson WEL, Kikinis R, Jolesz FA. Adaptive segmentation of MRI data.
   {\it IEEE Transactions on Medical Imaging. }1996;15(4):429-442.

\bibitem{Simko22}
Simko AT, L{\"o}fstedt T, Garpebring A, Nyholm T, Jonsson J. {MRI} bias field
  correction with an implicitly trained {CNN}.  In:  {\it Medical Imaging with
  Deep Learning}; 2022.

\bibitem{Cocosco97}
Cocosco CA, Kollokian V, Kwan RKS, Evans AC. BrainWeb: Online Interface to a 3D
  MRI Simulated Brain Database.  {\it NeuroImage. }1997;5(4).

\bibitem{Kern12}
Guerquin-Kern M, Lejeune L, Pruessmann KP, Unser M. Realistic Analytical
  Phantoms for Parallel Magnetic Resonance Imaging.  {\it IEEE Transactions on
  Medical Imaging. }2012;31(3):626-636.

\bibitem{Uecker15}
Uecker M, Ong F, Tamir JI, et al. Berkeley advanced reconstruction toolbox.
  {\it Proc. Intl. Soc. Mag. Reson. Med. }2015;23(2486).

\bibitem{Simard03}
Simard PY, Steinkraus D, Platt JC. Best practices for convolutional neural
  networks applied to visual document analysis.  In:  {\it Seventh
  International Conference on Document Analysis and Recognition, 2003.
  Proceedings.}:958-963; 2003.

\end{thebibliography}
\vfill\pagebreak


\vspace*{6pt}
\appendix

\section{Auxiliary Lemmas}
\subsection{A Specific Tempered Function Convergence}\label{app:conv}
\begin{definition}[Polynomial Growth Tempered Functions]\label{def:1}
We say locally integrable $T \in L^1_{\rm loc}(\mathbb{R}^n)$ has polynomial growth $m > 0$ if there exists some constant $C$ such that for all $R\geq 1$,
$$\int_{||x||_2 \leq R} |T(x)|\,dx \leq C R^{m}.$$
\end{definition}
As the action $\langle T,g\rangle = \int T(x) g(x)\,dx$ with any locally integrable $T$ defines a tempered distribution for $g\in\mathcal{S}(\mathbb{R}^n)$, we will sometimes make the distinction between the tempered function $T$ and the tempered distribution $T$. A consequence of definition \ref{def:1}, is that for every polynomial-growing tempered function $T$ there is some $R^\prime$ such that $T(x)$ is dominated as
$$|T(x)| \leq C^\prime ||x||^{m-n-2}_2,$$
for $||x|| \geq R^\prime$ and some constant $C^\prime > 0$.

\begin{lemma}\label{lem:conv}
Let $T$ be a tempered function of polynomial growth $m_0 > 0$ and define $T_\lambda(x)=T(x) e^{-\lambda ||x||_2}$. Then for every $\varphi\in\mathcal{S}(\mathbb{R}^n)$ we have
$$\lim_{\lambda \rightarrow 0^+}\int \mathscr{F}\{T_\lambda\}(\omega)\, \varphi(\omega)\,d\omega = \int \mathscr{F}\{T\}(\omega)\, \varphi(\omega)\,d\omega.$$ 
\end{lemma}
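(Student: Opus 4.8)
The statement is a standard "abelian" convergence result: $T_\lambda \to T$ in $\mathcal{S}'$ as $\lambda \to 0^+$, where convergence is tested against Schwartz functions through the Fourier transform. Since the Fourier transform is a continuous bijection on $\mathcal{S}'$, it suffices to show $T_\lambda \to T$ directly as tempered distributions, i.e.\ $\int T_\lambda(x)\varphi(x)\,dx \to \int T(x)\varphi(x)\,dx$ for every $\varphi\in\mathcal{S}(\mathbb{R}^n)$; applying this with $\varphi$ replaced by $\mathscr{F}\{\varphi\}$ and using the distributional identity \eqref{eq:F_iden} (valid for each $T_\lambda$ and for $T$, since all are tempered functions) then yields the claim. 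So the real content is: $\int T(x)\bigl(e^{-\lambda\|x\|_2}-1\bigr)\varphi(x)\,dx \to 0$.

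First I would fix $\varphi\in\mathcal{S}(\mathbb{R}^n)$ and split the integral over $\|x\|_2\le 1$ and $\|x\|_2 > 1$. On the unit ball, $T$ is integrable and $|e^{-\lambda\|x\|_2}-1|\le \lambda\|x\|_2 \le \lambda$, so that piece is bounded by $\lambda\,\|\varphi\|_\infty \int_{\|x\|_2\le1}|T|\,dx \to 0$. On the exterior region I would use the consequence of Definition~\ref{def:1} quoted just before the lemma: there is $R'\ge 1$ and $C'>0$ with $|T(x)|\le C'\|x\|_2^{m_0-n-2}$ for $\|x\|_2\ge R'$. Combined with the Schwartz decay $|\varphi(x)|\le C_\varphi\|x\|_2^{-m_0-1}$ (choosing the decay exponent large enough relative to $m_0$), the product $T(x)\varphi(x)$ is dominated by an integrable function $C''\|x\|_2^{-n-3}$ on $\{\|x\|_2 > 1\}$, uniformly in $\lambda$. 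Since $|e^{-\lambda\|x\|_2}-1|\le 1$ and $e^{-\lambda\|x\|_2}\to 1$ pointwise as $\lambda\to0^+$, the dominated convergence theorem forces the exterior piece to $0$ as well. Adding the two pieces gives $\int T_\lambda\varphi \to \int T\varphi$, and transferring through $\mathscr{F}$ as above completes the proof.

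The only mild subtlety — and the step I would be most careful about — is the bookkeeping on the exterior estimate: one must invoke the polynomial-growth bound on $T$ on the \emph{exterior} of a large ball (radius $R'$, not radius $1$), so the region $1\le\|x\|_2\le R'$ must be absorbed into the "compact part'' where $T\in L^1$ and the $\lambda\|x\|_2\le \lambda R'$ bound still gives a vanishing contribution. Once the domains are partitioned as $\{\|x\|_2\le R'\}\cup\{\|x\|_2>R'\}$ rather than at radius $1$, everything is routine: a uniform integrable dominator on the tail and a pointwise-small factor on the core. No delicate cancellation or oscillatory-integral estimate is needed here; the genuinely hard analysis in the paper is the explicit evaluation of $\mathscr{F}\{T\}$ in Section~\ref{sec:fourier}, for which this lemma is merely the justification that the $\lambda\to0^+$ regularization is legitimate.
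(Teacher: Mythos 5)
Your argument is correct and follows the same skeleton as the paper's proof: reduce via identity (\ref{eq:F_iden}) and the bijectivity of $\mathscr{F}$ on $\mathcal{S}(\mathbb{R}^n)$ to showing $\langle T_\lambda,\varphi\rangle\to\langle T,\varphi\rangle$ for all Schwartz $\varphi$, then split at a large radius, kill the core using $|1-e^{-\lambda\|x\|_2}|\le\lambda\|x\|_2$ together with local integrability of $T$ (the paper's version of this is the factor $1-e^{-\lambda R}$), and make the tail small uniformly in $\lambda$. The one genuine difference is the tail: you invoke the pointwise decay $|T(x)|\le C'\|x\|_2^{m_0-n-2}$ quoted after Definition \ref{def:1} plus Schwartz decay to build an integrable dominator and then apply dominated convergence, whereas the paper never touches that pointwise remark --- it bounds $\int_{\|x\|_2>R}|T\varphi|$ straight from the integral growth condition of Definition \ref{def:1} and closes with an explicit $\varepsilon$--$R_0$--$\lambda$ bookkeeping. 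Your route is a little cleaner to state, but it inherits whatever trust one places in that pointwise remark, which does not literally follow from the integral condition alone (a locally integrable $T$ with thin tall spikes satisfies the growth bound but no pointwise bound); if you want to rely only on Definition \ref{def:1}, replace the dominator step by a dyadic-shell estimate: for $N>m_0$, $\int_{2^kR'<\|x\|_2\le 2^{k+1}R'}|T|\,|\varphi|\lesssim (2^{k+1}R')^{m_0}(2^kR')^{-N}$, which sums over $k$, so $T\varphi\in L^1$ on the tail and dominated convergence applies with $|T\varphi|$ itself as the dominator. With that small patch your proof is complete and, if anything, more economical than the paper's quantitative argument.
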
 
\begin{proof}
Note by identity (\ref{eq:F_iden}) and the fact $\mathscr{F}:\mathcal{S}(\mathbb{R}^n)\rightarrow\mathcal{S}(\mathbb{R}^n)$ is bijective, it suffices to show
$$\lim_{\lambda \rightarrow 0^+}\int T_\lambda(x) \varphi(x)\,dx = \int T(x) \varphi(x)\,dx\quad\forall\varphi\in\mathcal{S}(\mathbb{R}^n).$$ 
Introduce shorthands $\chi_R(x) = 1\{||x||_2\leq R\}$ and $\langle \phi,\varphi\rangle = \int \phi(x) \varphi(x)\,dx$ for every $(\phi,\varphi)\in \mathcal{S}^\prime(\mathbb{R}^n)\times\mathcal{S}(\mathbb{R}^n)$. It follows that
\begin{align*}
|\langle T_\lambda, \varphi\rangle - \langle T, \varphi\rangle| &\leq |\langle T \chi_{R} - T, \varphi\rangle| + |\langle T_\lambda\chi_{R} - T\chi_{R}, \varphi\rangle| +\\ &\hspace{5mm} |\langle T_\lambda- T_\lambda \chi_{R}, \varphi\rangle|\\
&\leq 2|\langle T \chi_{R} - T, \varphi\rangle | + |\langle T_\lambda\chi_{R}-T\chi_{R}, \varphi\rangle|.
\end{align*}
Given the rapid-decay of Schwartz functions $\varphi\in\mathcal{S}(\mathbb{R}^n)$ there exists sufficiently large $R^{\prime\prime}$ such that
$$\bigg|\int_{||x||_2 > R} T(x) \varphi(x)\, dx\bigg| \leq \bigg|\int_{||x||_2 > R} T(x) \,||x||^{-m_0}\,dx\bigg|,$$
for $R \geq R^{\prime\prime}$. The growth conditions on $T$ imply that this upperbound goes to 0 as $R\rightarrow\infty$. That is, for every $\varepsilon>0$ there is some $R_0$ such that
$$R \geq R_0 \implies |\langle T \chi_{R}, \varphi\rangle - \langle T, \varphi\rangle| < \varepsilon/4.$$
Furthermore note that this relationship is monotonic such that a decrease in $\varepsilon$ produces a non-strict increase in $R_0$. Using the integrability of $T\chi_R \in L^1(\mathbb{R}^n)$
\begin{align*}
|\langle T_\lambda\chi_{R}, \varphi\rangle - \langle T\chi_{R}, \varphi\rangle| &\leq ||T\chi_{R}  \varphi||_{L^1} \big|\big|\chi_{R}\big(1 - e^{\lambda ||x||_2}\big)\big|\big|_{L^\infty}\\
&\leq (1 - e^{-\lambda R}) \sup_{x\in\mathbb{R}^n}|\varphi(x)|\, ||T\chi_{R}||_{L^1}\\
&\leq (1 - e^{-\lambda R}) \,c_\varphi C_R
\end{align*}
where $c_{\varphi},C_R < \infty$ by assumption on $\varphi$ and $T\chi_R$. Collect both constants into one constant $C_{\varphi,R}$. If $C_{\varphi,R_0} \leq \varepsilon/2$ then directly plugging in gives
$$|\langle T_{\lambda},\varphi\rangle - \langle T,\varphi\rangle|<\varepsilon\quad\text{for all}\;\lambda >0.$$
For the complementary condition $C_{\varphi,R_0} > \varepsilon/2$ we have
$$\lambda < -\log (1 - \varepsilon/(2C_{\varphi,R_0}))/R_0\implies |\langle T_{\lambda},\varphi\rangle - \langle T,\varphi\rangle|<\varepsilon.$$
As this can be done for any $\varphi\in\mathcal{S}(\mathbb{R}^n)$, we arrive at the desired relation
$$\lim_{\lambda\rightarrow 0^+}\int \mathscr{F}\{T_\lambda\}(\omega)\,\varphi(\omega)\,d\omega = \int \mathscr{F}\{T\}(\omega)\,\varphi(\omega) \,d\omega.$$
\end{proof}

\subsection{Domination of a Specific Rational Function}\label{app:domin}
\begin{lemma}\label{lem:domin}
For any $a,b\in\mathbb{R}$ we have
$$\bigg|\frac{\lambda^2 - (a+b)^2}{(\lambda^2 + (a+b)^2)^2}-\frac{\lambda^2 - (a-b)^2}{(\lambda^2 + (a-b)^2)^2}\bigg| \lesssim \frac{1}{(a-b)^2} - \frac{1}{(a+b)^2}.$$
for all $\lambda \in \mathbb{R}$.
\end{lemma}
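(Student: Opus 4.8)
The plan is to reduce the two-term expression to a single rational quantity in the variables $s = (a+b)^2 \geq 0$ and $t = (a-b)^2 \geq 0$, and then bound that quantity by $1/t - 1/s$ (note $t \le s$ is \emph{not} assumed; I will need to handle both orderings, or better, symmetrize). First I would introduce the auxiliary function $h(w) = \dfrac{\lambda^2 - w}{(\lambda^2 + w)^2}$ for $w \ge 0$, so that the left-hand side is $|h(s) - h(t)|$ and the right-hand side is $1/t - 1/s = (s-t)/(st)$. Since $h$ is continuously differentiable on $[0,\infty)$, the mean value theorem gives $h(s) - h(t) = h'(\xi)(s - t)$ for some $\xi$ between $s$ and $t$, where a direct computation yields $h'(w) = \dfrac{w - 3\lambda^2}{(\lambda^2 + w)^3}$. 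Hence $|h(s) - h(t)| = |s - t| \cdot \dfrac{|\xi - 3\lambda^2|}{(\lambda^2 + \xi)^3}$, and it suffices to show $\dfrac{|\xi - 3\lambda^2|}{(\lambda^2 + \xi)^3} \lesssim \dfrac{1}{st}$ uniformly, using only that $\xi$ lies between $s$ and $t$.

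The core estimate is therefore: for $\xi$ between $s$ and $t$ (both nonnegative), $\dfrac{|\xi - 3\lambda^2|}{(\lambda^2 + \xi)^3} \le \dfrac{C}{st}$. For the numerator I would use $|\xi - 3\lambda^2| \le 3(\lambda^2 + \xi)$, reducing the claim to $(\lambda^2 + \xi)^{-2} \lesssim (st)^{-1}$, i.e. $st \lesssim (\lambda^2 + \xi)^2$. Since $\xi$ is between $s$ and $t$ we have $\min(s,t) \le \xi \le \max(s,t)$, so $(\lambda^2 + \xi)^2 \ge (\lambda^2 + \min(s,t))^2 \ge \min(s,t)^2$; this handles the factor $\min(s,t)^2$ but I still need to recover the larger factor $\max(s,t)$ — a bound by $\min(s,t)^2$ alone is not enough when $s$ and $t$ are far apart. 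The fix is to observe that when $\max(s,t)$ is large, $\xi$ need not be large (it could be near $\min(s,t)$), so the MVT point is the wrong tool in that regime; instead I would split into cases.

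Concretely: \textbf{Case 1}, $\max(s,t) \le 2\min(s,t)$ (the two are comparable). Then $st \le 2\min(s,t)^2 \le 2(\lambda^2+\xi)^2$ and the MVT estimate above closes immediately. \textbf{Case 2}, $\max(s,t) > 2\min(s,t)$; here I would not use MVT but estimate $|h(s) - h(t)| \le |h(s)| + |h(t)|$ directly. One checks $|h(w)| = \dfrac{|\lambda^2 - w|}{(\lambda^2+w)^2} \le \dfrac{1}{\lambda^2 + w} \le \dfrac{1}{w}$ for $w > 0$, so $|h(s)| + |h(t)| \le 1/s + 1/t$. In Case 2 we have $1/s + 1/t \le 2/\min(s,t)$, while the right-hand side $1/t - 1/s = (s-t)/(st)$; if $t = \min(s,t)$ then $s - t > \min(s,t) = t$, so $(s-t)/(st) > 1/s$, and more usefully $(s-t)/(st) = 1/t - 1/s \ge 1/t - 1/(2\min(s,t)) \cdot(\text{something})$ — I'd verify that $\max(s,t) > 2\min(s,t)$ forces $1/t - 1/s \ge \tfrac{1}{2}\cdot\tfrac{1}{\min(s,t)}$, which combined with $|h(s)|+|h(t)| \le 2/\min(s,t)$ gives the bound with constant $4$. (Symmetry in $s \leftrightarrow t$ of the problem must be checked: swapping $s,t$ corresponds to $b \mapsto -b$, under which the left side is invariant but $1/t - 1/s$ changes sign, so the inequality as literally stated presumes $t \le s$, i.e. $|a-b| \le |a+b|$, equivalently $ab \ge 0$; I would note this implicit hypothesis, or replace the right-hand side by $\left|1/t - 1/s\right|$ which is what the downstream application in Corollary~\ref{cor:domin} actually needs.)

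The main obstacle I anticipate is exactly this case split and the bookkeeping of the implicit sign convention: the naive single application of the mean value theorem fails when $(a+b)^2$ and $(a-b)^2$ are wildly different in size, and one must fall back on the crude triangle-inequality bound $|h(s)|+|h(t)| \le 1/s+1/t$ in that regime and then argue that the \emph{difference} $1/t - 1/s$ is itself comparable to $1/\min(s,t)$ there. Everything else — computing $h'$, the bound $|\lambda^2 - w| \le \lambda^2 + w$, the partial-fraction-free elementary inequalities — is routine. I would also remark that the hidden constant can be taken to be an absolute number (something like $4$), independent of $\lambda$, which is all that is required for the dominated-convergence argument in Section~\ref{sec:fourier}.
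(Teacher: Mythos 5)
Your argument is correct, but it is a genuinely different route from the paper's. The paper proves the lemma by pure algebra: it puts the two fractions over the common denominator, writes the difference as $4ab\,\bigl(3\lambda^4+2\lambda^2(a^2+b^2)-(a^2-b^2)^2\bigr)/\bigl((\lambda^2+(a+b)^2)^2(\lambda^2+(a-b)^2)^2\bigr)$, bounds the numerator by $3\bigl(\lambda^2+(a+b)^2\bigr)\bigl(\lambda^2+(a-b)^2\bigr)$ (tripling each term so the product factors exactly), cancels, drops the $\lambda^2$'s, and finishes with the identity $4ab/(a^2-b^2)^2=\tfrac{1}{(a-b)^2}-\tfrac{1}{(a+b)^2}$, inserting a $\sgn(ab)$ factor to repair the sign --- i.e.\ it really proves the bound by $3\bigl|\tfrac{1}{(a-b)^2}-\tfrac{1}{(a+b)^2}\bigr|$, with the sign bookkeeping deferred to Corollary~\ref{cor:domin}. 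You instead set $s=(a+b)^2$, $t=(a-b)^2$, $h(w)=(\lambda^2-w)/(\lambda^2+w)^2$, and run a mean-value-theorem estimate with $|h'(\xi)|\le 3/(\lambda^2+\xi)^2$ when $s$ and $t$ are comparable, falling back on $|h(s)|+|h(t)|\le 1/s+1/t$ when they are not; both cases close correctly (your constants work out to $6$ and $4$ respectively, which is immaterial under $\lesssim$), and your diagnosis of the implicit sign convention --- that the right-hand side as literally written is negative when $ab<0$, so one must either assume $ab\ge 0$ or read it as $\bigl|\tfrac{1}{(a-b)^2}-\tfrac{1}{(a+b)^2}\bigr|$ --- is exactly the issue the paper absorbs into its $\sgn(ab)$ factor and the sign hypothesis of Corollary~\ref{cor:domin}. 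What each approach buys: the paper's single algebraic manipulation is shorter, gives the sharper constant $3$, and produces directly the signed form quoted in the corollary; your argument avoids having to spot the factorization of the tripled numerator, is more modular, and shows the more general fact that $|h(s)-h(t)|\lesssim|1/s-1/t|$ uniformly in $\lambda$ for this family of kernels, at the cost of a case split.
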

\begin{proof}
Let $A$ be a placeholder value for the absolute value term in the lemma description. With some manipulations
\begin{align*}
A &= 4|ab|\bigg|\frac{3\lambda^4 + 2\lambda^2(a^2+b^2) -(a^2-b^2)^2}{(\lambda^2 + (a+b)^2)^2(\lambda^2 + (a-b)^2)^2}\bigg|\\
 &\leq 4|ab|\bigg|\frac{3\lambda^4 + 6\lambda^2(a^2+b^2)+3(a^2-b^2)^2}{(\lambda^2 + (a+b)^2)^2(\lambda^2 + (a-b)^2)^2}\bigg|\\
 &= 3\bigg|\frac{4ab}{(\lambda^2 + (a+b)^2)(\lambda^2 + (a-b)^2)}\bigg|\\
&\leq 3 \bigg|\frac{4ab}{(a^2-b^2)^2}\bigg|\\
&= 3\,\text{sgn}(ab)\bigg(\frac{1}{(a-b)^2} - \frac{1}{(a+b)^2}\bigg)
\end{align*}
\end{proof}

Define a $\lambda$-parameterized function $f:D\rightarrow \mathbb{R}$ where
$$f(\eta;\lambda) = c\bigg(\frac{\lambda^2 - (a+b)^2}{(\lambda^2 + (a+b)^2)^2}-\frac{\lambda^2 - (a-b)^2}{(\lambda^2 + (a-b)^2)^2}\bigg)$$
and $a,b,c$ are all functions of $\eta$.

\begin{corollary}{\label{cor:domin}}
Suppose functions $a,b,c$ satisfy
$$\sgn(a(\eta)b(\eta)c(\eta)) = \sgn(a(\eta^\prime)b(\eta^\prime)c(\eta^\prime))$$
for all $\eta,\eta^\prime \in D$. Then for integrable $f(\eta;0)$ we have
$$\lim_{\lambda\rightarrow 0^+}\int f(\eta;\lambda)\,d\eta = \int f(\eta;0)\,d\eta.$$
\end{corollary}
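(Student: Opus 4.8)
The plan is to apply the dominated convergence theorem with a dominating function obtained directly from Lemma \ref{lem:domin}. First I would observe that for each fixed $\eta \in D$, elementary continuity of the rational expression in $\lambda$ at $\lambda = 0$ gives the pointwise limit $\lim_{\lambda \to 0^+} f(\eta;\lambda) = f(\eta;0)$, so the only issue is to produce an integrable $\lambda$-independent envelope. By Lemma \ref{lem:domin} applied with $a = a(\eta)$, $b = b(\eta)$, there is an absolute constant $K$ (the implied constant, which the proof of that lemma shows may be taken to be $3$) such that
$$
\left| \frac{\lambda^2 - (a+b)^2}{(\lambda^2 + (a+b)^2)^2} - \frac{\lambda^2 - (a-b)^2}{(\lambda^2 + (a-b)^2)^2} \right| \le K\left( \frac{1}{(a-b)^2} - \frac{1}{(a+b)^2} \right)
$$
for all $\lambda \in \mathbb{R}$, where the right-hand side is automatically nonnegative because $\mathrm{sgn}(ab)$ on the right of Lemma \ref{lem:domin} matches the sign of the left-hand quantity. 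Multiplying by $|c(\eta)|$ then yields $|f(\eta;\lambda)| \le K\, |c(\eta)|\big( (a(\eta)-b(\eta))^{-2} - (a(\eta)+b(\eta))^{-2} \big)$ for every $\lambda$ and every $\eta \in D$.

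Next I would use the sign hypothesis to identify this envelope with the $\lambda = 0$ integrand up to a constant. Evaluating $f(\eta;0)$ directly gives
$$
f(\eta;0) = c(\eta)\left( \frac{-1}{(a(\eta)+b(\eta))^2} + \frac{1}{(a(\eta)-b(\eta))^2} \right) = \mathrm{sgn}\big(c(\eta)\big)\,|c(\eta)|\left( \frac{1}{(a(\eta)-b(\eta))^2} - \frac{1}{(a(\eta)+b(\eta))^2} \right).
$$
Because $\tfrac{1}{(a-b)^2} - \tfrac{1}{(a+b)^2}$ has the sign of $ab$ and, by hypothesis, $\mathrm{sgn}(a(\eta)b(\eta)c(\eta))$ is constant on $D$, the product $\mathrm{sgn}(c(\eta)) \cdot \mathrm{sgn}\big( (a-b)^{-2} - (a+b)^{-2} \big) = \mathrm{sgn}(a(\eta)b(\eta)c(\eta))$ is a fixed sign $\sigma \in \{+1,-1\}$ (ignoring the measure-zero set where $ab = 0$, on which the bound is trivial). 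Hence $|c(\eta)|\big( (a-b)^{-2} - (a+b)^{-2} \big) = \sigma\, f(\eta;0)$, so the envelope equals $K\sigma\, f(\eta;0) = K\,|f(\eta;0)|$, which is integrable on $D$ precisely by the standing assumption that $f(\cdot\,;0)$ is integrable.

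With a pointwise limit and an integrable, $\lambda$-independent dominating function $K\,|f(\cdot\,;0)|$ in hand, the dominated convergence theorem applies and gives $\lim_{\lambda \to 0^+} \int_D f(\eta;\lambda)\,d\eta = \int_D f(\eta;0)\,d\eta$, as claimed. The main obstacle is not analytic depth but bookkeeping: one must be careful that the bound from Lemma \ref{lem:domin} is genuinely uniform in $\lambda$ (including $\lambda = 0$, which the lemma covers), and that the sign-consistency hypothesis is exactly what converts the a priori bound $K(\,(a-b)^{-2}-(a+b)^{-2}\,)$ into a constant multiple of $|f(\eta;0)|$ rather than of some larger, possibly non-integrable, quantity. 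One should also dispatch the null set $\{\eta : a(\eta)b(\eta) = 0\}$ — there the singular denominators $(a\pm b)^{-2}$ coincide and the integrand and its bound are controlled by continuity — so that the domination holds almost everywhere on $D$, which is all DCT requires.
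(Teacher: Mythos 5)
Your proposal is correct and follows essentially the same route as the paper: invoke Lemma \ref{lem:domin} (with constant $3$), use the constant sign of $abc$ to identify the $\lambda$-independent envelope with $3\,\sgn(abc)\,f(\eta;0)=3|f(\eta;0)|$, and conclude by dominated convergence using the assumed integrability of $f(\cdot\,;0)$. Your additional remarks on pointwise convergence and the null set $\{ab=0\}$ are just explicit bookkeeping the paper leaves implicit.
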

\begin{proof}
Note a simple extension of Lemma \ref{lem:domin} shows
\begin{multline*}
\bigg|c\bigg(\frac{\lambda^2 - (a+b)^2}{(\lambda^2 + (a+b)^2)^2}-\frac{\lambda^2 - (a-b)^2}{(\lambda^2 + (a-b)^2)^2}\bigg)\bigg| \leq 3c\,\text{sgn}(abc)\cdot \\ \bigg(\frac{1}{(a-b)^2} - \frac{1}{(a+b)^2}\bigg).
\end{multline*}
As $\text{sgn}(abc)$ is fixed for every $\eta\in D$, we have that $3\,\sgn(abc) f(\eta;0)$ dominates $|f(\eta;\lambda)|$ for all $\lambda \in \mathbb{R}_{>0}$.
\end{proof}

\section{Simulated Sensitivity}\label{app:simul}

\subsection{Smooth Three Coil Map}

Sensitivity maps were generated using three rectangular coils and sampling settings $L = 7$, $s = (1.5,1.5,1.5)$, $c = (0,0,0)$.

\begin{figure}[h]%
    \centering
    \includegraphics[height=6.5cm]{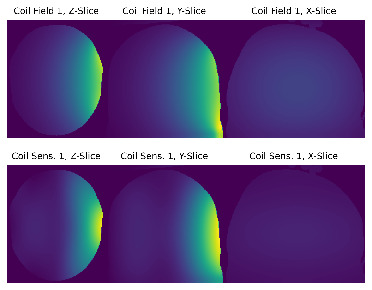}
    \caption{Emitted field and sensitivity comparison for coil 1.}%
\end{figure}

\begin{figure}%
    \centering
    \includegraphics[height=6.5cm]{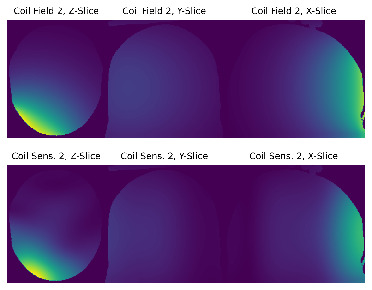} 
    \caption{Emitted field and sensitivity comparison for coil 2.}%
\end{figure}

\begin{figure}%
    \centering
    \includegraphics[height=6.5cm]{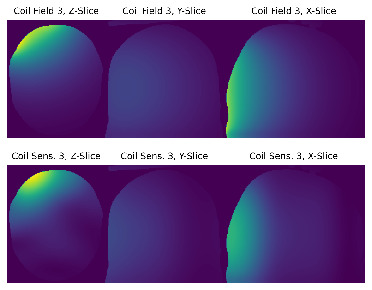}
    \caption{Emitted field and sensitivity comparison for coil 3.}%
\end{figure}

\newpage

\subsection{Rough Three Coil Map}
Sensitivity maps were generated using three rectangular coils and sampling settings $L = 12$, $s = (1.75,1.75,0.5)$, $c = (-0.25,-0.25,0)$.

\begin{figure}[h]%
    \centering
    \includegraphics[height=6.5cm]{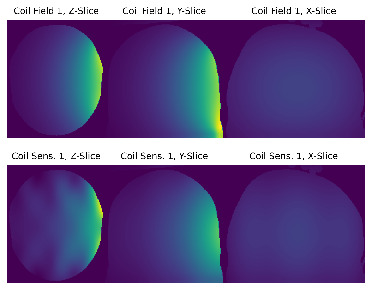}
    \caption{Emitted field and sensitivity comparison for coil 1.}%
\end{figure}

\begin{figure}%
    \centering
    \includegraphics[height=6.5cm]{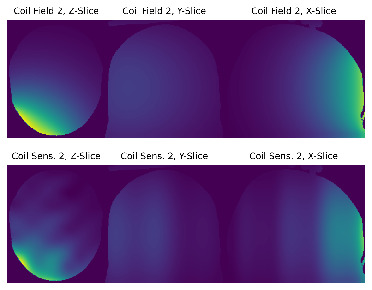}
    \caption{Emitted field and sensitivity comparison for coil 2.}%
\end{figure}

\begin{figure}%
    \centering
    \includegraphics[height=6.5cm]{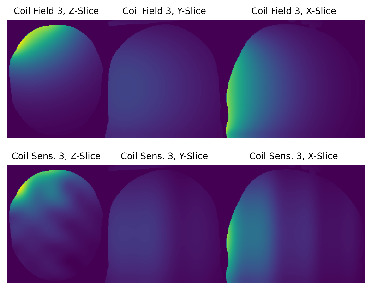}
    \caption{Emitted field and sensitivity comparison for coil 3.}%
\end{figure}

\end{document}